\newcolumntype{H}{>{\setbox0=\hbox\bgroup}c<{\egroup}@{}}
\newcommand{\removelatexerror}{\let\@latex@error\@gobble}
\newtheorem{theorem}{Theorem}
\newtheorem{lemma}{Lemma}
\newtheorem{formula}{Formula}
\title{No-hole $\lambda$-$L(k, k-1, \ldots, 2, 1)$-labeling for Square Grid}
\author{
        Soumen Atta\thanks{Part of this work has done by the author when he has been in the Faculty of Mathematics and Computer Science, University of \L\'od\'z, \L\'od\'z, Poland as an Erasmus+ Exchange PhD Student.}\\
        Department of Computer Science and Engineering,\\
        University of Kalyani, Kalyani, West Bengal, India\\
        Email: soumen.atta@klyuniv.ac.in
        \and
        Priya Ranjan Sinha Mahapatra\thanks{The author is partially supported by DST-PURSE scheme at University of Kalyani, India.}\\
        Department of Computer Science and Engineering,\\
        University of Kalyani, Kalyani, West Bengal, India\\
        Email: priya@klyuniv.ac.in
        \and
        Stanis{\l}aw Goldstein\\
        Faculty of Mathematics and Computer Science,\\University of \L\'od\'z, \L\'od\'z, Poland\\
        Email: goldstei@math.uni.lodz.pl
}
\date{}
\begin{document}

\maketitle

\begin{abstract}
Given a fixed $k$ $\in$ $\mathbb{Z}^+$ and $\lambda$ $\in$ $\mathbb{Z}^+$, the objective of a $\lambda$-$L(k, k-1, \ldots, 2, 1)$-labeling of a graph $G$ is to assign non-negative integers (known as labels) from the set $\{0, \ldots, \lambda-1\}$ to the vertices of $G$ such that the adjacent vertices receive values which differ by at least $k$, vertices connected by a path of length two receive values which differ by at least $k-1$, and so on. The vertices which are at least $k+1$ distance apart can receive the same label.
The smallest $\lambda$ for which there exists a $\lambda$-$L(k, k-1, \ldots, 2, 1)$-labeling of $G$ is known as the $L(k, k-1, \ldots, 2, 1)$-labeling number of $G$ and is denoted by $\lambda_k(G)$. The ratio between the upper bound and the lower bound of a $\lambda$-$L(k, k-1, \ldots, 2, 1)$-labeling is known as the approximation ratio.
In this paper a lower bound on the value of the labeling number for square grid is computed and a formula is proposed which yields a $\lambda$-$L(k, k-1, \ldots, 2, 1)$-labeling of square grid, with  approximation ratio at most $\frac{9}{8}$. The labeling presented is a no-hole one, i.e., it uses each label from $0$ to $\lambda-1$ at least once.
\end{abstract}

\noindent
Keywords: Graph labeling, Vertex labeling, Labeling number, No-hole labeling, Square grid, Frequency assignment problem (FAP), Channel assignment problem (CAP), Approximation ratio.




\section{Introduction} \label{intro}
The {\em frequency assignment problem} ({\em FAP}) is a problem of assigning frequencies to different radio transmitters so that no interference occurs~\cite{hale1980frequency}. This problem is also known as the {\em channel assignment problem} (CAP)~\cite{dubhashi2002channel, bertossi2003channel}. Frequencies are assigned to different radio transmitters in such a way that comparatively close transmitters receive frequencies with more gap than the transmitters which are significantly apart from each other.
Motivated by this problem of assigning frequencies to different transmitters, Yeah~\cite{yeh1990} and after that Griggs and Yeh~\cite{griggs1992labelling} proposed an $L(2,1)$-labeling for a simple graph. An $L(2,1)$-labeling of a graph $G$ is a mapping $f : V(G) \rightarrow \mathbb{Z}^+$ such that $|f(u) - f(v)| \geq 2$ when $d(u,v) = 1$, and $|f(u) - f(v)| \geq 1$ when $d(u,v) = 2$, where $d(u,v)$ denotes the minimum path distance between the two vertices $u,~v~\in~V$ (One can use the same label if the distance between two vertices is greater than $2$) \cite{griggs1992labelling, chang19962, bodlaender2000lambda, chang2000d, shao20072}.

Various generalizations of the original problem, for diverse types of graphs, finite or infinite, has been described in the literature \cite{chang2003distance, georges2003labeling, calamoneri2004h, fishburn2003no, yeh2006survey, kuo20042, calamoneri2009h, calamoneri2006optimal, georges1994relating, shao20082}.
Instead of $L(2,1)$-labeling one can consider $L(3,2,1)$-labeling, and more generally an $L(k,k-1, \dots, 1)$-labeling. Nandi et al.~\cite{nandi2015k} considered an $L(k,k-1, \dots, 1)$-labeling for a triangular lattice.

In this paper $L(k,k-1, \dots, 2, 1)$-labeling for a square grid is considered. The definition of the problem is given in Section~\ref{def}. The lower bound on the value of $\lambda_k$, the labeling number for the square grid, is derived in Section~\ref{lowerbound}. In Section~\ref{formula}, a formula is given that attaches a label to any vertex of an infinite square grid for arbitrary values of $k$. The correctness proof of the proposed formula is given Section~\ref{proof}. In Section~\ref{no-hole} we prove that the proposed formula gives a no-hole labeling. Our $\lambda$-labeling yields immediately an upper bound on $\lambda_k$, given together with the approximation ratio implied by the proposed formula in Section~\ref{upper}. Finally, the paper is concluded in Section~\ref{conclusion}.

\section{Problem Definition} \label{def}
Let $G=(V,E)$ be a graph with a set of vertices $V$ and a set of edges $E$, and let $d(u,v)$ denote the shortest distance between vertices $u,~v~\in~V$. Given a fixed $k$ $\in$ $\mathbb{Z}^+$ and $\lambda$ $\in$ $\mathbb{Z}^+$, a $\lambda$-$L(k, k-1, \ldots, 2, 1)$-labeling of the graph is a mapping $f: V \rightarrow \{0, \dots, \lambda-1\}$ such that the following inequalities are satisfied:
\[\lvert f(x)-f(y) \rvert \geq \left\{
  \begin{array}{lr}
     k & : d(x,y) = 1\\
    k-1 & : d(x,y) = 2\\
    \vdots \\
    1 & : d(x,y) = k,\\
  \end{array}
\right.
\]
which can be written more compactly as

\begin{equation}\tag{*}
\lvert f(x)-f(y) \rvert \geq k+1-d(x,y) \text{ for }x \neq y.
\end{equation}

We shall call any function $f:V\to \mathbb{Z}$ satisfying the inequality a \emph{labeling function}.

If the distance between two vertices is at least $k+1$, the same label can be used for both of them. This minimum distance is known as the {\em reuse distance}~\cite{nandi2015k}. The $L(k, k-1, \ldots, 2, 1)$-labeling number for the graph, denoted by $\lambda_k$, is the minimum $\lambda$ for which a valid $\lambda$-$L(k, k-1, \ldots, 2, 1)$-labeling for the graph exits. Hence, our objective is to find, for each $k$, a no-hole $\lambda$-$L(k, k-1, \ldots, 2, 1)$-labeling with $\lambda$ as close to $\lambda_k$ as possible.

We consider an infinite planar square grid $G=(V,E)$ with the set of vertices $V=\mathbb{Z}\times\mathbb{Z}$ and the set of edges $E=\{\{u,v\}\colon u=(u_1,u_2), v= (v_1,v_2), \text{and either } |u_1-v_1|=1, u_2=v_2 \text{ or } u_1=v_1,  |u_2-v_2|=1\}.$ It will be called \emph{`the square grid'} in the sequel. The distance between $u$ and $v$ used in the sequel is the \emph{Manhattan distance}: $d(u,v)=|u_1-v_1|+|u_2-v_2|.$

\section{Lower Bound on $\lambda_k$} \label{lowerbound}
\begin{theorem} \label{theorem-LB}
For $k\geq1$,
\begin{equation*}
  \lambda_k \geq
  \begin{cases}
    \frac{2}{3}p(p+1)(2p+1)+2 &\text{if $k=2p$ is even,}\\
    \frac{2}{3}p(p+1)(2p+3)+2 &\text{if $k=2p+1$ is odd.}
  \end{cases}
\end{equation*}
\end{theorem}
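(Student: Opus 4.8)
The plan is to isolate a finite set of vertices on which the inequality (*) forces the labels to occupy a wide range, and then to invoke the trivial fact that any labels, lying in $\{0,\dots,\lambda-1\}$, satisfy $\max-\min\le\lambda-1$; hence $\lambda_k\ge 1+(\text{largest label span occurring in the grid})$. Writing $p=\lfloor k/2\rfloor$, I would take the set to be the ball $B=\{v:d(c,v)\le p\}$ of radius $p$ about an arbitrary vertex $c$. Its diameter is $2p$, which is $\le k$ for both parities, so every pair of distinct vertices of $B$ lies within distance $k$ and (*) is active for all of them; in particular all labels on $B$ are distinct. Setting $N+1=|B|=2p^2+2p+1$, it remains to bound from below the span $\max_{v\in B}f(v)-\min_{v\in B}f(v)$.

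Next I would list the vertices of $B$ in increasing order of label, $u_0,\dots,u_N$ with $f(u_0)<\cdots<f(u_N)$, and telescope. Applying (*) to each consecutive pair gives $f(u_{i+1})-f(u_i)\ge k+1-d(u_i,u_{i+1})$, so
\[
\max_{v\in B}f(v)-\min_{v\in B}f(v)=\sum_{i=0}^{N-1}\big(f(u_{i+1})-f(u_i)\big)\ \ge\ N(k+1)-\sum_{i=0}^{N-1}d(u_i,u_{i+1}).
\]
The sum $\sum_i d(u_i,u_{i+1})$ is the length of the Hamiltonian path through $B$ induced by the (adversarial) label order, and since I cannot control that order, the whole argument hinges on an upper bound for this tour length valid for every ordering.

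The key step is to route each edge through the center: by the triangle inequality $d(u_i,u_{i+1})\le d(u_i,c)+d(c,u_{i+1})$, every interior vertex is then counted twice and the two endpoints once, so $\sum_{i}d(u_i,u_{i+1})\le 2\sum_{v\in B}d(c,v)-d(u_0,c)-d(u_N,c)$. Because $u_0\ne u_N$, at most one of them is $c$, hence the subtracted terms total at least $1$; and the radial sum is evaluated layer by layer, the sphere at distance $j$ containing exactly $4j$ vertices, giving $\sum_{v\in B}d(c,v)=\sum_{j=1}^{p}4j^2=\tfrac23 p(p+1)(2p+1)$. Substituting these into the previous display with $N=2p^2+2p$ and $k+1=2p+1$ collapses the right-hand side to $\tfrac23 p(p+1)(2p+1)+1$, so $\lambda_k\ge\tfrac23 p(p+1)(2p+1)+2$, which is the even case exactly.

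For odd $k=2p+1$ the same ball works, since its diameter $2p$ is still at most $k$; repeating the estimate with $k+1=2p+2$ in place of $2p+1$ and the same radial sum establishes the second inequality (the computation in fact returns a slightly larger constant, so the stated bound follows a fortiori). I expect the sole genuine obstacle to be the tour-length bound of the third step: a direct estimate of $\sum_i d(u_i,u_{i+1})$ is hopeless because the label order is arbitrary, and it is exactly the passage through the center---trading an uncontrolled Hamiltonian-path length for twice a fixed sum of radial distances---that converts the easy quadratic count $N$ of distinct labels into the correct cubic-in-$p$ growth. A secondary point to check is that the radius must be exactly $p=\lfloor k/2\rfloor$, so that $B$ is a \emph{clique} for (*) while remaining as large as possible.
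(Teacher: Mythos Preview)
Your proof is correct and follows essentially the same strategy as the paper's: restrict to the ball $B_p$ of radius $p=\lfloor k/2\rfloor$, list its $2p^2+2p+1$ vertices in increasing label order, telescope, and convert the sum of consecutive gaps into a sum of radial distances $d(\,\cdot\,,c)$, with the extra ``$+1$'' coming from the observation that the two extreme vertices cannot both equal $c$. The only difference is in packaging: the paper reaches the bound $d(u_i,u_{i+1})\le d(u_i,c)+d(u_{i+1},c)$ via an iterated refinement over the nested sub-balls $B_{p-1}\supset B_{p-2}\supset\cdots\supset B_0$ (at each stage gaining one unit for every $u_{i+1}$ lying in the smaller ball), whereas you obtain the same inequality in one stroke from the triangle inequality through $c$; your route is shorter and more transparent, but the resulting numerical bound is identical for even $k$, and for odd $k$ both you and the paper simply add $N=2p^2+2p$ to the even estimate and note that this already suffices.
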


\begin{proof}
We start with the case of even $k=2p$. We shall write $B_m$ for the ball $\{u\in V\colon d(0,u)\leq m\}$, and $S_m$ for the sphere $\{ u\in V\colon d(0,u)=m\}$ (here $0=(0,0)$). Note that there is just one point in $S_0$ and $4m$ points in $S_m$ for $m>0$ (See Fig.~\ref{evenclique}). It is easy to calculate that there are exactly $1+4+\ldots+4m= 2m^2+2m+1$ points in $B_m$. To obtain a lower bound on the $L(k, k-1, \ldots, 2, 1)$-labeling number, we identify the smallest interval containing all integers needed to label the vertices in the ball $B_p$. To this aim, we use a labeling function $f: V\to \mathbb{Z}$. It is clear that $\lambda_k\geq \max f(B_p)-\min f(B_p)+1$.

\begin{figure}[h]
 \centering
 \includegraphics{./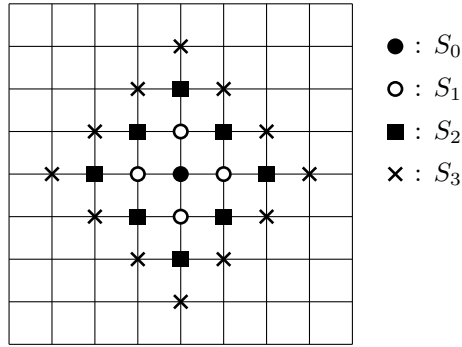}
 \caption{$S_m$ when $m = 0, 1, 2, 3$.}
 \label{evenclique}
\end{figure}

Let us put all the values of the function $f$ on $B_p$ in increasing order:
$z_0<z_1<\ldots< z_n$. We have $\lambda_k\geq z_n-z_0+1$. Note that because of (*), the function $f$ is injective on $B_p$, hence $n=2p^2+2p$ is one less than the number of points in $B_p$. Let $u_i=f^{-1}(z_i)$ and and let $q,r$ be such that $u_0\in S_q, u_n\in S_r$.

The method of obtaining the lower bound is a formalization of that used by Nandi et al.~\cite{nandi2015k}. According to (*), $z_{i+1}-z_i\geq 2p+1-\max \{d(u_i, v)\colon v\in B_p\setminus\{u_i\}\}$. If $u_i\in S_m$, then $\max \{d(u_i, v)\colon v\in B_p\setminus\{u_i\}\}=m+p$, hence
$z_{i+1}-z_i\geq p+1-m$. Considering $z_i$ for $i=0,1,\ldots,n-1$, we can already estimate that
\[
z_n-z_0= (z_1-z_0)+\ldots +(z_n-z_{n-1})\geq |S_p|+2|S_{p-1}|+\ldots
+p|S_1|+(p+1)|S_0|-(p+1-r).
\]

Let us call the number on the RHS of the inequality $c_p$. Now, if a point $u_i$ is such that $i<n$ and $u_{i+1}\in B_{p-1}$, then $z_{i+1}-z_i\geq 2p+1 - \max \{d(u_i, v)\colon v\in B_{p-1}\setminus\{u_i\}\}=p+2-m$ (instead of $p+1-m$). There are at least $|B_{p-1}|$ points like this if $q =p$, and $|B_{p-1}|-1$ if $q\neq p$, and the RHS of the inequality above can be increased by the amount. Continuing further in this manner, we get
\begin{align*}
z_n-z_0 & \geq c_p+ (|B_{p-1}|-1)+\ldots+(|B_{q}|-1)+|B_{q-1}|+\ldots+|B_0| \\
   & =c_p+|S_{p-1}|+2|S_{p-2}|+\ldots+(p-1)|S_1|+p|S_0|-(p-q) \\
   & = 4\big(\sum_{m=1}^{p}m(p+1-m)+\sum_{m=1}^{p-1}m(p-m)\big)+(r+q).
\end{align*}
Using
\[
1\cdot p+2\cdot (p-1)+\ldots+(p-1)\cdot 2 +p\cdot 1=p(p+1)(p+2)/6,
\]
and the fact that $r+q$ is at least $1$, which happens if $p,q\in\{0,1\}$ (note that they must be different, since there is only one point in $S_0$),
we easily get $\lambda_k\geq \frac{2}{3}p(p+1)(2p+1)+2.$

Now, if $k=2p+1$ is odd, each of the $2p^2+2p$ summands $z_1-z_0, z_2-z_1, \ldots, z_n-z_{n-1}$ is larger by one, hence $\lambda_k\geq \frac{2}{3}p(p+1)(2p+3)+2.$ A better estimate can be obtained by considering the set $T_0=\{(0,0), (0,1)\}$ and, for $m>0$, the sets $T_m=\{u\in \mathbb{Z}\times\mathbb{Z}\colon d(u,T_0)=m\}$ (see Fig.~\ref{oddclique}). This, however, does not change the asymptotic behavior of $\lambda_k$.
\begin{figure}[h]
 \centering
 \includegraphics{./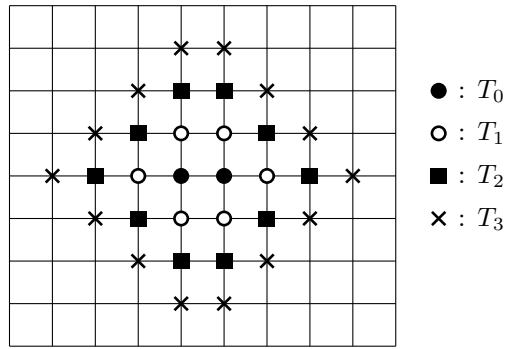}
 \caption{$T_m$ when $m = 0, 1, 2, 3$.}
 \label{oddclique}
\end{figure}

\end{proof}

\section{Proposed Formula} \label{formula}
In this section a formula is given to find the label of any vertex of the square grid under $L(k, k-1, \ldots, 2, 1)$-labeling for general $k$. Let the label assigned to the vertex $v(x, y)$ is denoted by $L(x, y)$. Formula~\ref{formulae} gives the definition of $L(x, y)$.

\begin{formula} \label{formulae}
 \begin{equation*}
 \small
  L(x,y) =
  \begin{cases}
    [(2p+3)x + (3p^2+7p+5)y] \mod \frac{1}{2}(p+1)(3p^2+5p+4) &\text{if $k=2p+1$ and $p(\geq 1)$ is odd;}\\
    [(2p+3)x + (3p^2+6p+3)y] \mod \frac{1}{2}(3p^3+8p^2+8p+4) &\text{if $k=2p+1$ and $p(\geq 0)$ is even;}\\
    [(2p+1)x + (3p^2+4p+2)y] \mod \frac{1}{2}(3p^3+5p^2+5p+1) &\text{if $k=2p$ and $p(\geq 3)$ is odd;}\\
    [(2p+1)x + (3p^2+3p+1)y] \mod \frac{1}{2}p(3p^2+5p+4) &\text{if $k=2p$ and $p(\geq 2)$ is even.}\\
  \end{cases}
\end{equation*}
\end{formula}

Note that many correct labelings may exist when the coefficients of $x$ and $y$ are restricted to be co-prime. If this restriction is removed then correct labelings also exist with reduced $\lambda_k$. Thus we have considered all possible combinations of the coefficients for $x$ and $y$ at the time of designing Formula~\ref{formulae} for finding a labeling with the minimum $\lambda_k$. The assignment of labeling for $k=7$ is shown in Fig.~\ref{LabelingExample} for some vertices.

\begin{figure}[htbp]
 \centering
 \includegraphics{./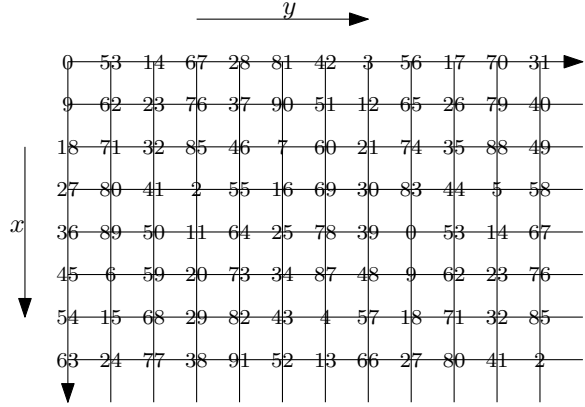}
 \caption{Assignment of labeling for $k=7$}
 \label{LabelingExample}
\end{figure}

\subsection{Correctness Proof of the Proposed Formula} \label{proof}
Formula~\ref{formulae} is said to be correct if and only if the inequality constraints of the problem mentioned in Section~\ref{def} are satisfied. The proof of Theorem~\ref{theorem-2} shows the correctness of Formula~\ref{formulae}.
Lemma~\ref{lemma-3} is needed to prove Theorem~\ref{theorem-2}.

\begin{theorem} \label{theorem-2}
Formula ~\ref{formulae} yields a $\lambda$-$L(k, k-1, \ldots, 2, 1)$-labeling of the square grid, with
\begin{equation}
  \lambda =
  \begin{cases}
    \frac{1}{2}(p+1)(3p^2+5p+4) &\text{if $k=2p+1$ and $p(\geq 1)$ is odd;}\\
    \frac{1}{2}(3p^3+8p^2+8p+4) &\text{if $k=2p+1$ and $p(\geq 0)$ is even;}\\
    \frac{1}{2}(3p^3+5p^2+5p+1) &\text{if $k=2p$ and $p(\geq 3)$ is odd;}\\
    \frac{1}{2}p(3p^2+5p+4) &\text{if $k=2p$ and $p(\geq 2)$ is even.}\\
  \end{cases}\tag{**}
\end{equation}

More precisely, if $|x_1-x_2|+|y_1-y_2| = r$, then $|L(x_1,y_1)-L(x_2,y_2)| \geq k+1-r$, where $0 < r \leq k+1$ and $L(x,y)$ is defined by Formula~\ref{formulae}.
\end{theorem}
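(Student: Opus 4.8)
The plan is to exploit that, up to reduction modulo $\lambda$, the labeling is a group homomorphism. Fix one of the four cases of Formula~\ref{formulae} and write $a$, $b$, $\lambda$ for its two coefficients and its modulus, so that $L(x,y)\equiv ax+by\pmod{\lambda}$. Then for any two vertices
\[
L(x_1,y_1)-L(x_2,y_2)\equiv a(x_1-x_2)+b(y_1-y_2)\pmod{\lambda},
\]
so the quantity to be bounded depends only on the displacement $(\Delta x,\Delta y)=(x_1-x_2,y_1-y_2)$. Translating one endpoint to the origin, it suffices to control $L(x,y)$ for every $(x,y)$ with $1\le|x|+|y|\le k$ (the case $r=k+1$ being vacuous). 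A key preliminary observation is that the relevant quantity is the \emph{circular} residue: writing $s=(ax+by)\bmod\lambda\in\{0,\dots,\lambda-1\}$, as the pair of vertices is translated the difference $|L(x_1,y_1)-L(x_2,y_2)|$ takes both values $s$ and $\lambda-s$, so what must be bounded below is $\min(s,\lambda-s)$. Hence Theorem~\ref{theorem-2} is equivalent to the single family of inequalities
\[
\min(s,\lambda-s)\ge k+1-(|x|+|y|)\qquad\text{for all }(x,y)\neq(0,0)\text{ with }|x|+|y|\le k.
\]

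First I would organize this into a finite, geometric check. Partition the punctured ball into the spheres $S_r$, $1\le r\le k$; on $S_r$ the required lower bound is the constant $k+1-r$, which shrinks as $r$ grows. Equivalently, set $g(x,y)=\min(s,\lambda-s)+(|x|+|y|)$ and aim to prove $g\ge k+1$ throughout the punctured ball. Stated this way, the content is that lattice points producing a \emph{small} circular residue must lie at \emph{large} Manhattan distance from the origin. This is precisely what I would extract from Lemma~\ref{lemma-3}: for each admissible value $j$, it should identify the lattice point(s) of least Manhattan norm whose image is congruent to $\pm j$ modulo $\lambda$, thereby turning each modular inequality into an explicit comparison of polynomials in $p$.

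With that reduction in place, I would split according to the four parity cases of Formula~\ref{formulae} and verify the inequality sphere by sphere. Since $a\sim 2p$, $b\sim 3p^2$ and $\lambda\sim\tfrac32 p^3$, for $(x,y)$ on a sphere of radius $r\le k=O(p)$ the product $ax+by$ wraps around $\lambda$ only boundedly often; the argument then amounts to determining which multiple of $\lambda$ each $ax+by$ is nearest to and checking that the residual distance to that multiple, increased by $r$, is at least $k+1$. The four cases are structurally identical and differ only in the precise cubic expressions, so each collapses to polynomial inequalities in $p$ that must be confirmed on the stated range ($p\ge1$, $p\ge0$, $p\ge3$, $p\ge2$, respectively), the small excluded values being where the pattern degenerates.

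The main obstacle I anticipate is exactly this wrap-around bookkeeping: $\min(s,\lambda-s)$ is piecewise linear in $(x,y)$ with breakpoints wherever $ax+by$ meets a multiple of $\lambda$, so no single linear bound for $g$ can work. The delicate regime is the inner spheres (small $r$), where the demanded gap $k+1-r$ is largest and one must rule out any displacement whose image lands within $k-r$ of a multiple of $\lambda$; a single such ``accidental'' near-multiple would destroy the labeling. Establishing this \emph{uniformly} in $p$, rather than checking finitely many small instances, is where the specific coefficients of Formula~\ref{formulae} and the parity hypotheses earn their keep, and where Lemma~\ref{lemma-3}'s explicit minimal-norm preimages are indispensable.
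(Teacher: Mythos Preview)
Your reduction to the displacement $(x,y)=(x_1-x_2,y_1-y_2)$ is correct and is exactly what Lemma~\ref{lemma-3} provides. But you misread that lemma: it only records the identity $|L(x_1,y_1)-L(x_2,y_2)|=L(x_1-x_2,y_1-y_2)$ (after ordering so the left side is nonnegative); it says nothing about ``minimal-norm preimages'' of a given residue. Relatedly, the detour through $\min(s,\lambda-s)$ is correct but unnecessary: since you quantify over \emph{all} displacements with $|x|+|y|=r$, you automatically check $(-x,-y)$ along with $(x,y)$, and $L(-x,-y)=\lambda-s$ whenever $L(x,y)=s\neq0$. Thus it suffices---and this is what the paper does---to show $L(x,y)\ge k+1-r$ for every $(x,y)$ with $0<|x|+|y|\le k$.

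Where your plan and the paper part ways is the execution. You propose to fix a residue $j$ and locate its nearest lattice preimage; the paper instead fixes the integer $t$ with $ax+by\in[ct,c(t+1))$ (only $t\in[-5,4]$ occur, since $|ax+by|<5c$) and splits into five cases according to the position of $by$ relative to $ct$ and the sign of $x$. In the easy cases one reads off $L(x,y)=ax+by-ct$ or $L(x,y)=ax+c$ directly; in the remaining cases one shows at most two values of $y$ are possible for each $t$, computes the extremal $x$'s, and tabulates $r$ and $L$ explicitly. The boundary value $p=1$ is handled by hand. Your sketch correctly anticipates that the wrap-around bookkeeping is the whole difficulty and that only boundedly many multiples of $\lambda$ are in play, but it stops short of the concrete partition that actually makes the check finite; that partition---not Lemma~\ref{lemma-3}---is where the work lies, and nothing in your outline yet supplies it.
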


\begin{lemma} \label{lemma-3}
Let $a$, $b$, $c$ $\in$ $\mathbb{Z}^+$ and $L(x,y)=(ax+by) \mod c$. Now for any $x_1$, $y_1$, $x_2$, $y_2$ $\in$ $\mathbb{Z}$, if $L(x_1,y_1)>L(x_2,y_2)$ then $|L(x_1,y_1)-L(x_2,y_2)|=L(x_1-x_2,y_1-y_2)$.
\end{lemma}

\begin{proof}
Clearly $0\leq L(x,y)<c$ for any $x$, $y$ $\in$ $\mathbb{Z}$. Hence, $0 \leq |L(x_1,y_1)-L(x_2,y_2)| < c$. Again, for any $A$, $B$ $\in$ $\mathbb{Z}$, ($A \mod c - B \mod c$) mod $c = $($A-B$) $\mod$ $c$.
Put $A=ax_1+by_1$ and $B=ax_2+by_2$. Then $|L(x_1,y_1)-L(x_2,y_2)| = A \mod c - B$ mod $c$ $=$ ($A \mod c - B \mod c$) mod $c $=$ $($A-B$) $\mod$ $c$ $=$ $L(x_1-x_2,y_1-y_2)$.
\end{proof}

\begin{proof}[Proof of Theorem~\ref{theorem-2}]
We prove it for $L(x, y) = [(2p+3)x + \left(3p^2+7p+5\right)y] \mod \frac{1}{2}(p+1)(3p^2+5p+4)$ and $k=2p+1$, $p (\geq 3)$ is odd, and show the correctness for $p=1$ separately. The correctness of Formula~\ref{formulae} can be proved for other values of $k$ in a similar way.

We can change the order of ($x_1,y_1$) and ($x_2,y_2$) in such a way that $L(x_1,y_1) \geq L(x_2,y_2)$, since exchanging indices $1$  and $2$ does not change $r$. By Lemma~\ref{lemma-3} we have to show that for $x$, $y$ $\in \mathbb{Z}$ with $|x|+|y|=r$, $L(x,y) \geq k+1-r$. Note that the inequality is always satisfied for $r=k+1$. Hence, we can assume $0 < r < k+1$.

Put $a=2p+3, b = 3p^2+7p+5$ and $c=\frac{p+1}{2}(3p^2+5p+4)$. Note that $|ax+by| < 5c$ for any $x$, $y$ with $|x|+|y|=r$.

{\bf Case-I} Assume that $ct \leq by \leq ax+by < c(t+1)$ for some $t \in [-5,4] \cap \mathbb{Z}$.  Then
\[
(ax+by) \mod c = ax+by-ct \geq ax > 2p+2.
\]
(Since $x > 0$, $ax \geq a = 2p+3$.) Hence, $L(x,y) > 2p+2 = k+1 \geq k+1-r$.

{\bf Case-II} Assume that $x=0$. Let $Y_t= \{y : ct \leq by < c(t+1)\}$ and $y_t = \min (Y_t), t \in[-5,4] \cap \mathbb{Z}$ for $|y_t| \leq k$. Note that $b>0$, so that whenever $L(x,y_t) \geq k+1$, also $\forall y\in Y_t$, $L(x,y) \geq k+1$.
Since $y\neq0$ (we already have $x=0$), we have $y_0=1$ and $by_0 \mod c=b>2p+2=k+1$. Hence, we need only consider $t\neq0$.
Put $d=\frac{2p^2+3p+1}{6p^2+14p+10}=\frac{p+1}{2}\frac{2p+1}{b}$. Note that for each odd $p\neq1$, $\frac{1}{4}<d<\frac{1}{3}.$
Now $y_t\geq ct/b = t(\frac{p+1}{2}-d)$, so that $y_t=t\frac{p+1}{2}+e$, where
\begin{equation*}
  e=
  \begin{cases}
    0 &\text{if $t=1,2$ or $3$;}\\
    -1 &\text{if $t=4$;}\\
    1 &\text{if $t=-1,-2$ or $-3$;}\\
    2 &\text{if $t=-4$ or $t=-5$.}
  \end{cases}
\end{equation*}

We have $L(0,y_t)=by_t-ct=t(b\frac{p+1}{2}-c)+be=t(2p^2+3p+1)/2 + be$. The inequality $L(0,y_t)\geq 2p+2$ is obviously true if $t$ is positive and $e=0$.
If $t=4$, we have $L(0,y_t)=2(2p^2+3p+1)-b=p^2-p-3 \geq 2p+2$ for odd $p\geq5$, and $L(0,y_t) \geq k+1-r$ for $p=3$. For $t=-1, -2$ or $-3$, it is enough to check the ``worst'' case,
namely $t=-3$, which yields $L(0,y_t)=(5p+7)/2 \geq 2p+2$. Again, we can omit $t=-4$ and check that for $t=-5$ we get $L(0,y_t)=(2p^2+13p+15)/2 \geq 2p+2.$


{\bf Case-III} Assume that $by < ct \leq ax+by$.  Note that then $c(t-1)<by<ct \leq ax+by <c(t+1)$. We will show that there exist at most two $y$'s satisfying the inequality. Let
$y_t =$max$\{y: by<ct~\wedge~(\exists x: ct\leq ax+by)\}$.
Thus $by_t<ct\leq ax+by_t$ for some $x$.
Suppose $b(y_t - 2) < ct \leq ax+b(y_t - 2)$ for some $x$.
Then $ax+b(y_t-1)=(ax-2b)+by_t \geq ct>by_t$.
But $ax-2b \leq a(2p+2)-2b=2[(p+1)(2p+3)-(3p^2+7p+5)] = 2(-p^2-2p-2)<0$, which is a contradiction.
If we find $x_t=\min\{x:by_t<ct\leq ax+by_t\}$ and $x'_t=\min\{x:b(y_t-1)<ct\leq ax+b(y_t-1)\}$ and if $|x_t|+|y_t|<2p+2$ (similarly $|x'_t|+|y_t|<2p+2$), then it is enough to check that $L(x_t,y_t) \geq k+1-r$ and $L(x'_t,y_t-1)\geq k+1-r$.

Put $d=\frac{2p^2+3p+1}{6p^2+14p+10}=\frac{p+1}{2}\frac{2p+1}{b}$. Note that for each odd $p\neq1$, $\frac{1}{4}<d<\frac{1}{3}.$
Now $y_t < ct/b = t(\frac{p+1}{2}-d)$, so that $y_t=t\frac{p+1}{2}+e$, where
\begin{equation*}
  e=
  \begin{cases}
    -1 &\text{if $t=1, 2$ or $3$;}\\
    -2 &\text{if $t=4$;}\\
     0 &\text{if $t=-1, -2, -3$ or $-4$;}\\
     1 &\text{if $t=-5$.}
  \end{cases}
\end{equation*}

Using $ct \leq ax_t + by_t \Rightarrow x_t \geq \frac{ct - by_t}{a}$, and $L(x_t, y_t) = ax_t + by_t -ct$,  we construct Tab.~\ref{case-3.1}. Whenever $|y_t|$, $|x_t|$ or $r$ is at least $2p+2$, there is no need for further calculation, and the respective positions are filled with dashes.

\begin{table}[h]
\centering
\caption{}
\small
\label{case-3.1}
\begin{tabular}{cccccc}
\hline
\textbf{$t$} & \textbf{$y_t$}      & \textbf{$x_t$}    & \textbf{$r = |x_t| + |y_t|$}         & \textbf{$k+1-r$} & \textbf{$L(x_t, y_t)$} \\ \hline
$1$                     & $\frac{p-1}{2}$           & $(p+2)$                   & $\frac{3}{2}(p+1)$      & $\frac{1}{2}(p-3)$          & $\frac{3}{2}(p+1)$       \\
$2$                     & $p$                       & $\frac{(p+3)}{2}$         & $\frac{3}{2}(p+1)$      & $\frac{1}{2}(p-3)$          & $\frac{1}{2}(p+1)$       \\
$3$                     & $\frac{(3p+1)}{2}$        & $2$                       & $\frac{1}{2}(3p+5)$     & $\frac{1}{2}(p-1)$          & $\frac{1}{2}(3p+5)$      \\
$4$                     & $2p$                      & $(p+1)$                   & $3p+1$                  & $-$                         & $-$                      \\
$-1$                    & $-\frac{(p+1)}{2}$        & $\frac{(p+1)}{2}$         & $p+1$                   & $p+1$                       & $p+1$                    \\
$-2$                    & $-(p+1)$                  & $(p+1)$                   & $2(p+1)$                & $-$                         & $-$                      \\
$-3$                    & $-\frac{3(p+1)}{2}$       & $(3p+1)$                  & $\frac{1}{2}(9p+5)$     & $-$                         & $-$                      \\
$-4$                    & $-2(p+1)$                 & $-$                       & $-$                     & $-$                         & $-$                      \\
$-5$                    & $-\frac{(5p+3)}{2}$       & $-$                       & $-$                     & $-$                         & $-$                      \\ \hline
\end{tabular}
\end{table}

Using $ct \leq ax'_t + b(y_t - 1) \Rightarrow x'_t \geq \frac{ct - b(y_t - 1)}{a}$, and $L(x'_t, y_t-1) = ax'_t + b(y_t - 1) -ct$, we construct Tab.~\ref{case-3.2} with the corresponding values. As above, we use dahses whenever $|y_t - 1|$, $|x'_t|$ or $r$ is at least $2p+2$, and there is no need for further calculation.

\begin{table}[!h]
\centering
\caption{}
\small
\label{case-3.2}
\begin{tabular}{cccccc}
\hline
\textbf{$t$} & \textbf{$y_t - 1$}      & \textbf{$x'_t$}    & \textbf{$r = |x'_t| + |y_t - 1|$}         & \textbf{$k+1-r$} & \textbf{$L(x'_t, y_t - 1)$} \\ \hline
$1$                     & $\frac{p-3}{2}$           & $\frac{(5p+7)}{2}$        & $3p+2$                  & $-$                         & $-$                      \\
$2$                     & $(p-1)$                   & $2p+3$                    & $-$                     & $-$                         & $-$                      \\
$3$                     & $\frac{(3p-1)}{2}$        & $\left\{
	\begin{array}{ll}
		\frac{3(p+1)}{2},  & \mbox{if } p = 3,5 \\
		\frac{3p+1}{2},  & \mbox{if } p (\geq 7)
	\end{array}
\right.$                      & $-$                     & $-$                         & $-$     \\
$4$                     & $2p-1$                    & $\frac{(5p+9)}{2}$        & $-$                     & $-$                         & $-$                      \\
$-1$                    & $-\frac{(p+3)}{2}$        & $2p+2$                    & $-$                     & $-$                         & $-$                   \\
$-2$                    & $-(p+2)$                  & $\frac{(5p+3)}{2}$        & $-$                     & $-$                         & $-$                      \\
$-3$                    & $-\frac{(3p+5)}{2}$      & $(3p+2)$                  & $-$                     & $-$                         & $-$                      \\
$-4$                    & $-(2p+3)$                 & $-$                       & $-$                     & $-$                         & $-$                      \\
$-5$                    & $-\frac{5(p+1)}{2}$       & $-$                       & $-$                     & $-$                         & $-$                      \\ \hline
\end{tabular}
\end{table}

{\bf Case-IV} Assume that $ax+by < ct \leq by$, where $t$ $\in$ $[-4,4] \cap \mathbb{Z}$.
Then $c(t-1) < ax+by <ct \leq by < c(t-1)$ and $ax+by \geq ax+ct = c(t-1) +(ax+c)$.
Hence, $L(x,y)$ $= (ax+by) \mod c$ $= ax+c$.

Since $ax \geq a(-2p-2) = -2(2p+3)(p+1)$, we have

$L(x,y) = \frac{\left(p+1\right)\left(3p^2+5p+4\right)}{2} - 2(2p+3)(p+1) = \frac{3}{2}p^3-\frac{11}{2}p-4 \geq 2p+2$, for $p \geq 3$.

Therefore, for $p \geq 3$, $L(x,y) \geq k+1-r$.

{\bf Case-V} Assume that $x < 0$, $ax+by \geq ct$ and $by < c(t+1)$.

Let $Y_t = \{y : \exists~x \text{ s.t. } ct \leq ax+by < by < c(t+1)\}$.
Then it is enough to check the inequality for $y_t = \min (Y_t)$ and for $y_t + 1$, and for them we should check if for
$x_t = \min \{x : ct \leq ax+by_t < by_t < c(t+1)\}$ and
$x'_t = \min \{x : ct \leq ax+b(y_t+1) < b(y_t+1) < c(t+1)\}$.

Thus we need to check $L(x_t,y_t) \geq k+1-r$ and $L(x'_t,y_t+1) \geq k+1-r$.

Using $by_t < c(t+1)$, we construct Tab.~\ref{case-5}.

\begin{table}[h]
\centering
\caption{}
\small
\label{case-5}
\begin{tabular}{c|ccccccccc}
\hline
\textbf{$t$}   & $1$ & $2$ & $3$ & $4$ & $-1$ & $-2$ & $-3$ & $-4$ & $-5$ \\ \hline
\textbf{$y_t$} & $p$ & $\frac{(3p+1)}{2}$ & $2p+1$ & $\frac{(5p-1)}{2}$ & $-1$  & $-\frac{(p+1)}{2}$  & $-(p+1)$  & $-\frac{3(p+1)}{2}$  & $-(2p+1)$  \\ \hline
\end{tabular}
\end{table}

If we calculate the values of $x_t$ and $x'_t$ from $ct \leq ax_t+by_t$ and $ct \leq ax'_t+b(y_t+1)$ respectively, then $x_t$ and $x'_t$ are always greater than $2p+2$. This completes the proof for $p \geq 3$.

{\bf Case $p=1.$} Then $k=3$ and $L(x,y)=(5x+15y)$ mod $12$.
We just need to consider different values of $x$ and $y$ such that $x$ $\in$ $\{-3,-2,-1\}$ and $y$ $\in$ $\{-3,-2,-1,0,1,2,3\}$. Clearly when ($x,y$) $\in \{(-3,-3),(-3,-2),(-3,-1),(-3,1),$ $(-3,2),(-3,3),(-2,-3),$ $(-2,-2),$ $(-2,3),(-2,2),$ $(-1,3),$ $(-1,-3)\}$, we don't need to check anything because $r = |x|+|y| \geq 4$. When $(x,y) =(-3,0), L(x,y)= 9$ and $k+1-r=1$. Similarly, when ($x,y$) $\in \{(-2,-1),(-2,0),(-2,1),$ $(-1,-2),(-1,-1),(-1,0),$ $(-1,1),$ $(-1,2)\}$, $L(x,y) \geq k+1-r$.

Hence, we always have $L(x,y) \geq k+1-r$.

\end{proof}

\subsection{No-hole Labeling Proof} \label{no-hole}
\begin{lemma} \label{lemma-4}
Formula~\ref{formulae} gives no-hole labeling.
\end{lemma}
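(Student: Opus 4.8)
The plan is to recognize that each of the four formulas in Formula~\ref{formulae} has the shape $L(x,y)=(ax+by)\bmod c$ with $c=\lambda$, and to determine exactly which residues in $\{0,1,\dots,c-1\}$ occur as values. The map $(x,y)\mapsto(ax+by)\bmod c$ is a group homomorphism from $\mathbb{Z}\times\mathbb{Z}$ onto a subgroup of $\mathbb{Z}/c\mathbb{Z}$, namely the subgroup generated by the residues of $a$ and $b$. A standard fact is that this subgroup consists precisely of the multiples of $g:=\gcd(a,b,c)$ modulo $c$. Consequently the labeling is no-hole if and only if $g=1$; equivalently, by B\'ezout's identity, for each target $v\in\{0,\dots,c-1\}$ there are integers $x,y,z$ with $ax+by+cz=v$, whence $(ax+by)\bmod c=v$ and the label $v$ is attained. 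So the whole lemma reduces to the arithmetic claim $\gcd(a,b,c)=1$ in each of the four cases.

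First I would reduce the computation of $\gcd(a,b,c)$ to that of $\gcd(a,b)$, using $\gcd(a,b,c)=\gcd(\gcd(a,b),c)$. In each case $a$ is odd and small relative to $b$, so a short Euclidean reduction expresses $\gcd(a,b)$ as $\gcd(a,N)$ for a small constant $N$: for the two odd-$k$ formulas one gets $N=5$ and $N=3$ respectively, for the odd-$p$ even-$k$ formula $N=3$, and for the even-$p$ even-$k$ formula the reduction collapses to $\gcd(a,b)=1$ outright. For instance, in the first case multiplying $b=3p^2+7p+5$ by $4$ and subtracting multiples of $a=2p+3$ leaves the residue $5$, and since $a$ is odd this shows $\gcd(a,b)=\gcd(a,5)\mid 5$.

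Next, for the three families where $\gcd(a,b)$ can exceed $1$, I would isolate the unique residue class of $p$ for which a prime $\ell\in\{3,5\}$ divides both $a$ and $b$, and then check that this same $\ell$ does not divide $c$. This is the decisive step: for the exceptional $p$ one computes $2c$ modulo $\ell$ and finds it nonzero, so $\ell\nmid c$ and therefore $\gcd(\gcd(a,b),c)=1$ despite $\gcd(a,b)=\ell$. Combined with the generic case $\gcd(a,b)=1$, this yields $g=1$ in every instance.

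I expect the main obstacle to be bookkeeping rather than anything conceptual: the four formulas must each be pushed through the Euclidean reduction, and the exceptional residue classes of $p$ must be paired correctly with the congruence check on $c$. The cleanest organization is to treat the first formula ($k=2p+1$, $p$ odd) in full detail---exhibiting the reduction $\gcd(a,b)\mid 5$ and verifying $5\nmid c$ when $p\equiv 1\pmod 5$---and then to note that the remaining three cases follow by the identical two-step scheme with the constants indicated above.
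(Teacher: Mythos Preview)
Your proposal is correct and follows essentially the same approach as the paper: reduce the no-hole property to the arithmetic condition $\gcd(a,b,c)=1$, compute $\gcd(a,b)$ in each of the four families (obtaining a divisor of $5$, $3$, $3$, and $1$ respectively), and in the exceptional residue classes of $p$ verify that the relevant prime does not divide $c$. The paper phrases the key reduction via the principal ideal $(\gcd(a,b))$ rather than the subgroup of $\mathbb{Z}/c\mathbb{Z}$, but the argument and the case analysis are the same, including working out the first family in detail and declaring the others analogous.
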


\begin{proof}
Formula~\ref{formulae} is of the form $(ax + by) \mod c$, with $a, b$ and $c$ depending on parity of $k$ and $p$. We shall show that it is enough to check
that $\gcd(a,b,c)$ is $1$. In fact, let $m=\gcd(a,b)$ and denote by $(m)$ the principal ideal in $\mathbb{Z}$ generated by $m$. It is well known (and easy to see)
that the set $\{ax+by\colon x,y\in\mathbb{Z}\}$ equals $(m)$.  Now, if $ \gcd(m,c)  = \gcd(a,b,c) = 1$, then
$mu+cv=1$ for some $u,v\in\mathbb{Z}$. If $k\in \{0,1,\ldots,c-1\}$, then $kmu + kcv = k$, so that $kmu \equiv k \mod c$. But $kmu\in (m)$, which means
that for some $x,y\in \mathbb{Z}, (ax+by) \mod c = k$, and all integer values from $0$ up to $c-1$ are attained.

We note  the values of $\gcd(a,b)$ for different values of $k$.

 \begin{equation*}
  \gcd(a,b) =
  \begin{cases}
    1 \text{~or~} 5 &\text{if $k=2p+1$ and $p(\geq 1)$ is odd;}\\
    1 \text{~or~} 3 &\text{if $k=2p+1$ and $p(\geq 0)$ is even;}\\
    1 \text{~or~} 3 &\text{if $k=2p$ and $p(\geq 3)$ is odd;}\\
    1 &\text{if $k=2p+1$ and $p(\geq 2)$ is even.}\\
  \end{cases}
\end{equation*}

Consider the case when $k=2p+1$ and $p(\geq 1)$ is odd. In this case $a=2p+3, b=3p^2+7p+5$ and $c=\frac{1}{2}(p+1)(3p^2+5p+4)$. If $\gcd(a,b) = 1$, $\gcd(a,b,c) = 1$, and there is nothing to prove. If $\gcd(a,b) = 5$, then $p$ is congruent to $1$ modulo $5$, and $c$ is congruent to $2$ modulo $5$. So, $c$ is not divisible by $5$, and hence $\gcd(a,b,c) = 1$.  The proof will be similar for other values of $k$.
\end{proof}

\subsection{Upper Bound on $\lambda_k$ and approximation ratio} \label{upper}

\begin{theorem} \label{theorem-3}
We have $\lambda_k\leq \lambda$, with $\lambda$ given by (**). Consequently, the approximation ratio for the problem is not greater than $\frac{9}{8}$.
\end{theorem}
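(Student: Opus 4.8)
The plan is to obtain both assertions as corollaries of the two bounds already established, with no new construction. The inequality $\lambda_k\le\lambda$ is immediate from Theorem~\ref{theorem-2}: that theorem produces, for every $k$, an explicit labeling function satisfying (*) and taking values in $\{0,\ldots,\lambda-1\}$ with $\lambda$ given by (**). Since $\lambda_k$ is by definition the least $\lambda'$ admitting a valid $\lambda'$-labeling, exhibiting one valid labeling that uses $\lambda$ labels already forces $\lambda_k\le\lambda$. I would state this in a single line and move on to the ratio.

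For the approximation ratio I would combine this upper bound with the lower bound of Theorem~\ref{theorem-LB}. Write $\mathrm{LB}(p)$ for the relevant right-hand side of Theorem~\ref{theorem-LB} and $\lambda(p)$ for the matching case of (**). If $L(p)$ is any lower bound on $\lambda_k$, then the approximation ratio is $\lambda(p)/\lambda_k$, which is at most $\lambda(p)/L(p)$ once $L(p)\le\lambda_k$; hence it suffices to produce a lower bound $L(p)$ with $8\,\lambda(p)\le 9\,L(p)$ in each of the four parity cases. The first thing I would record is that in every case $\lambda(p)$ is a cubic in $p$ with leading term $\tfrac32 p^{3}$, while $\mathrm{LB}(p)$ has leading term $\tfrac43 p^{3}$; since $\tfrac{3/2}{4/3}=\tfrac98$, the cubic parts of $9\,\mathrm{LB}(p)-8\,\lambda(p)$ cancel identically, leaving a residual of degree at most two. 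Thus $9/8$ is exactly the limiting value of the ratio, and the whole question is decided by the sign of this residual quadratic in each case.

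The hard part will be the small-$p$ regime. Substituting the naive bound of Theorem~\ref{theorem-LB} into $9\,\mathrm{LB}(p)-8\,\lambda(p)$ produces a quadratic with a \emph{negative} leading coefficient (for $k=2p$ with $p$ even it is $-2p^{2}-10p+18$, and the other cases are similar), so the naive lower bound alone certifies the ratio bound only for large $p$ and is too weak for the smallest admissible values ($p=1$ for odd $k$, $p=2,3$ for even $k$), where $\mathrm{LB}(p)$ is smallest. To close these finitely many cases I would replace $\mathrm{LB}$ by the sharper lower bound already signalled in the proof of Theorem~\ref{theorem-LB}, obtained from the thickened balls $T_m$ (together with its analogue for even $k$): this construction keeps the leading coefficient $\tfrac43$ but enlarges the quadratic and linear terms of the bound, and the crux of the argument is to verify that this enlargement suffices to make the residual nonnegative down to the base values of $p$. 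For the handful of genuinely small $k$ I would instead check $8\,\lambda\le 9\,\lambda_k$ directly from the explicit data, for instance using the $p=1$ labeling $L(x,y)=(5x+15y)\bmod 12$ of Theorem~\ref{theorem-2} against the corresponding clique bound, so that the generic quadratic estimate and the finitely many direct checks together cover every $k$.
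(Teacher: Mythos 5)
Your first assertion is handled exactly as in the paper: Theorem~\ref{theorem-2} exhibits a valid $\lambda$-labeling, so $\lambda_k\le\lambda$ by minimality of $\lambda_k$. For the ratio, be aware that the paper does far less than you attempt: its entire argument is the observation that the leading terms are $\tfrac32p^3$ for $\lambda$ and $\tfrac43p^3$ for the lower bound, so that $\lim_{p\to\infty}\mathrm{UB}/\mathrm{LB}=\tfrac98$; the claim ``not greater than $\tfrac98$'' is justified only in this asymptotic sense. Your cancellation-of-cubics computation reproduces precisely that and would already suffice to match the paper.

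The genuine gap appears when you try to upgrade this to the uniform inequality $8\lambda\le9\lambda_k$. You compute the residual $9\,\mathrm{LB}(p)-8\,\lambda(p)=-2p^2-10p+18$ correctly for $k=2p$, $p$ even, but you read its sign regime backwards. A quadratic with negative leading coefficient is nonnegative only on a bounded interval of \emph{small} $p$ (here $p\le1$, while this case requires $p\ge2$) and negative for all larger $p$; the other cases behave the same way (e.g.\ the residual is $-2p^2-18p+2<0$ for every $p\ge1$ when $k=2p+1$, $p$ odd, and $-2p^2-14p+14<0$ for every admissible odd $p\ge3$ when $k=2p$). So $\mathrm{UB}/\mathrm{LB}>\tfrac98$ for \emph{every} admissible $p$, with the ratio tending to $\tfrac98$ strictly from above: the deficiency is not confined to ``the smallest admissible values'' but persists for all $p$. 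Consequently your repair strategy collapses — sharpening the lower bound via the $T_m$-sets ``to close these finitely many cases,'' plus direct checks at small $k$, cannot certify $8\lambda\le9\lambda_k$, because infinitely many $p$ need the improvement. What would actually be required is a lower bound, valid for all $p$, whose $p^2$-coefficient is permanently larger than in Theorem~\ref{theorem-LB} (up to lower-order terms, at least $\tfrac{32}{9}$ in place of $\tfrac{10}{3}$ in the odd-$k$, odd-$p$ case, and $\tfrac{20}{9}$ in place of $2$ in the even-$k$ cases); the paper never quantifies the $T_m$ refinement beyond remarking that it does not change the asymptotics, and your proposal does not supply such a bound either. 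As written, your argument establishes only the limit statement — which is all the paper itself proves — while announcing, and then failing to deliver, the strictly stronger uniform bound.
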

\begin{proof}
The first statement follows directly from Theorem~\ref{theorem-2}:  $\lambda_k\leq\lambda$ for any $\lambda$-labeling.
The approximation ratio is the ratio between the upper bound (UB), given by $\lambda$ from (**), and the lower bound (LB), given in Theorem~\ref{theorem-LB}. Note that for all the cases mentioned in Formula~\ref{formulae}, $\displaystyle{\lim_{p \to \infty}} \frac{UB}{LB} = \frac{9}{8}$.
\end{proof}

\section{Conclusion} \label{conclusion}
In this paper $\lambda$-$L(k, k-1, \ldots, 2, 1)$-labeling for square grid is proposed and the lower bound on $\lambda_k$, the $L(k, k-1, \ldots, 2, 1)$-labeling number,  is computed. A formula for a no-hole $\lambda$-$L(k, k-1, \ldots, 2, 1)$-labeling of square grid is given, implying  at most $\frac{9}{8}$ approximation ratio. The correctness proof of the proposed formula is given and it is also proved that the proposed formula gives a no-hole labeling.

\section*{Acknowledgement}\label{sec:ack}
We would like to thank Adam Paszkiewicz, Faculty of Mathematics and Computer Science, University of \L\'od\'z, \L\'od\'z, Poland for fruitful discussions on the lower bound estimate.

\section*{References}
\bibliography{SquareGrid}
\bibliographystyle{unsrt}
\end{document}